\numberwithin{equation}{section}
\numberwithin{algorithm}{section}
\numberwithin{figure}{section}
\numberwithin{table}{section}
\newtheorem{theorem}{Theorem}[section]
\newtheorem{remark}[theorem]{Remark}
\newtheorem{corollary}[theorem]{Corollary}
\newtheorem{example}[theorem]{Example}
\newtheorem{problem}[theorem]{Problem}
\newcommand{\p}[1]{\mbox{\textsf{#1}}}
\newcommand{\ToCheck}{\textcolor{red}{\fbox{\textsf{ToCheck!!!}}}}
\begin{document}

\begin{frontmatter}

\title{Fast subdivision of B\'{e}zier curves}

\author[A1]{Pawe{\l} Wo\'{z}ny\corref{cor}}
\ead{pwo@cs.uni.wroc.pl}

\author[A1]{Filip Chudy}
\ead{fch@cs.uni.wroc.pl}

\cortext[cor]{Corresponding author.}

\address[A1]{Institute of Computer Science, University of Wroc{\l}aw,
             ul.~Joliot-Curie 15, 50-383 Wroc{\l}aw, Poland}

\begin{abstract}
It is well-known that a $d$-dimensional polynomial B\'{e}zier curve of degree 
$n$ can be subdivided into two segments using the famous de Casteljau algorithm 
in $O(dn^2)$ time. Can this problem be solved more efficiently? In this paper, 
we show that it is possible to do this in $O(dn\log{n})$ time using the fast 
Fourier transform and its inverse. Experiments show that the direct application 
of the new method performs well only for small values of $n$, as the algorithm 
is numerically unstable. However, a slightly modified version---which still has 
$O(dn\log{n})$ computational complexity---offers good numerical quality, which 
is confirmed by numerical experiments conducted in \textsf{Python}. Moreover, 
the new method has a nice property: if a B\'{e}zier curve is extended by an 
additional control point, the subdivision can be updated in $O(d)$ time.

A similar idea can be applied to speed up the subdivision of rational 
B\'{e}zier curves and rectangular B\'{e}zier surfaces, as well as to compute 
the derivatives of B\'{e}zier curves more efficiently.
\end{abstract}

%
%
%
%

\begin{keyword}
Bernstein polynomials, B\'{e}zier curves and surfaces, de Casteljau algorithm, 
subdivision, derivatives, fast Fourier transform. 
\end{keyword}

\end{frontmatter}

\section{Introduction}                                  \label{S:Introduction}

Let $\p{P}_n:[0,1]\rightarrow{\mathbb E}^d$ $(d\in\mathbb N)$ be a 
(polynomial) \textit{B\'{e}zier curve of degree $n\in\mathbb N$} with the 
\textit{control points} $\p{W}_0,\p{W}_1,\ldots,\p{W}_n\in{\mathbb E}^d$,
\begin{equation}\label{E:BezierCurve}
\p{P}_n(t):=\sum_{i=0}^{n}B^n_i(t)\p{W}_i\qquad (0\leq t\leq 1).
\end{equation}
Here $B^n_i$ denotes the \textit{$i$th Bernstein (basis) polynomial of degree 
$n$},
\begin{equation}\label{E:Def_BernPoly}
B^n_i(t):=\binom{n}{i} t^i (1-t)^{n-i}\qquad (0\leq i\leq n;\; n\in\mathbb N).
\end{equation}

The parametric B\'{e}zier curves have many well-known advantages and are 
therefore among the most popular tools used in computer graphics and CAD/CAGD. 
The standard way to evaluate $\p{P}_n(c)$ for a given $c\in[0,1]$ is to use 
the \textit{de Casteljau algorithm}:
\begin{equation}\label{E:DC-alg}
\left\{
\begin{array}{l}
\p{W}_i^{(0)}:=\p{W}_i\qquad (0\leq i\leq n),\\[1ex]
\p{W}^{(k)}_i:=(1-c)\p{W}^{(k-1)}_i+c\,\p{W}^{(k-1)}_{i+1}\qquad 
                                          (1\leq k\leq n;\; 0\leq i\leq n-k).
\end{array}
\right.
\end{equation}
Then, $\p{P}_n(c)=\p{W}^{(n)}_0$. The algorithm follows from the identity
\begin{equation*}
B^n_k(t)=tB^{n-1}_{k-1}(t)+(1-t)B^{n-1}_k(t)\qquad
                  (0\leq k\leq n;\ \mbox{$B^p_q\equiv 0$ if $q<0$ or $q>p$}).
\end{equation*}

The de Casteljau method is very simple. Moreover, it has an elegant geometric 
interpretation, good numerical properties, and involves only convex 
combinations of points in ${\mathbb E}^d$. However, its drawback is the
quadratic computational complexity, i.e., $O(dn^2)$. 

\begin{table*}[ht!]
\begin{center}
\renewcommand{\arraystretch}{1.5}
\begin{tabular}{lllll}
$\p{W}^{(0)}_0$\\
$\p{W}^{(0)}_1$ & $\p{W}^{(1)}_0$\\
$\p{W}^{(0)}_2$ & $\p{W}^{(1)}_1$ & $\p{W}^{(2)}_0$\\
\multicolumn{1}{c}{$\vdots$} & \multicolumn{1}{c}{$\vdots$} & 
                                  \multicolumn{1}{c}{$\vdots$} & $\ddots$\\ 
$\p{W}^{(0)}_n$ & $\p{W}^{(1)}_{n-1}$ & $\p{W}^{(2)}_{n-2}$ & 
                           \multicolumn{1}{c}{$\cdots$} & $\p{W}^{(n)}_{0}$\\
\end{tabular}
\end{center}
\renewcommand{\arraystretch}{1}
\caption{The de Casteljau table.}\label{T:DC-table}
\end{table*}

For more properties and applications, as well as the history and 
generalizations of Bernstein polynomials, B\'{e}zier curves, and the de 
Casteljau algorithms (including the \textit{rational case}), see, e.g., 
\cite{Farin2002}, \cite{Farouki2012}.

It is worth mentioning that new linear-time geometric algorithms for evaluating 
polynomial and rational B\'{e}zier curves have been recently presented by the
authors in~\cite{WCh2020}. These methods also have a geometric interpretation,
operate on convex combinations of points and are numerically stable. Although 
their justification is more complex, they are significantly faster than the 
polynomial and rational de Casteljau algorithms, as they work in $O(dn)$ time.

For details, efficient implementations, numerical properties, as well as test 
results, see \cite{WCh2020}, \cite[Chapter 2]{FChPhD}, and \cite{CFudaPhD}, 
\cite{CFuda2023}.

\section{The problem}                                        \label{S:Problem}

The de Casteljau algorithm can also be used to \textit{subdivide} a B\'{e}zier 
curve~\eqref{E:BezierCurve} into two segments. This is particularly significant 
for designers working with CAD systems.

For example, if we fix $c\in(0,1)$, then the \textit{left part} of the curve 
$\p{P}_n$ (cf.~\eqref{E:BezierCurve}), i.e., the B\'{e}zier curve
$$
\p{P}^{L}_n([0,1]):=\p{P}_n([0,c])
$$
is given by
\begin{equation}\label{E:Subdivision}
\p{P}^{L}_n(t)=\sum_{i=0}^{n}B^n_i(t)\p{V}_i\qquad (0\leq t\leq 1),
\end{equation}
where the control points $\p{V}_0,\p{V}_1,\ldots,\p{V}_n\in{\mathbb E}^d$ 
are defined as
\begin{equation}\label{E:PointsV}
\p{V}_k\equiv\p{V}_k(c):=\sum_{i=0}^kB^k_i(c)\p{W}_i\qquad (0\leq k\leq n).
\end{equation}
See, e.g., \cite[Eqs.~(5.29), (6.22)]{Farin2002}. 

The points in the de Casteljau table (cf.~Table~\ref{T:DC-table}) satisfy the
following relation:
$$
\p{W}^{(k)}_i\equiv\p{W}^{(k)}_i(c)=\sum_{j=0}^kB^k_j(c)\p{W}_{i+j},
$$
where $0\leq k\leq n$ and $0\leq i\leq n-k$ (cf.~\eqref{E:DC-alg}). For 
details, see, e.g., \cite[Eq.~(5.7)]{Farin2002}.

Note that for $k=0,1,\ldots,n$, $\p{V}_k=\p{W}^{(k)}_0$ 
(cf.~\eqref{E:PointsV}), i.e., the control points of the left part of $\p{P}_n$ 
lie along the diagonal of the de Casteljau table. Similarly, the control points 
of the \textit{right part} of $\p{P}_n$ comprise the bottom row of 
Table~\ref{T:DC-table}. This means that the problems of finding the left and 
the right segments can be easily reduced to each other.

In summary, by executing the de Casteljau algorithm once, we subdivide 
a $d$-dimensional polynomial B\'{e}zier curve of degree $n$ into two segments 
in $O(dn^2)$ time.

Note that, using the new geometric linear-time method mentioned in 
Section~\ref{S:Introduction}, one can also subdivide a polynomial B\'{e}zier 
curve into two parts. The computational complexity, however, remains $O(dn^2)$. 
See~\cite[p.~3]{WCh2020}.

Now, both for theoretical and practical reasons, an important and interesting 
question arises: \textit{can one subdivide a polynomial B\'{e}zier curve 
faster?}

Accordingly, this paper addresses the following main problem.

\begin{problem}\label{P:MainProblem}
Given $c\in(0,1)$, a natural number $n$ and the points 
$\p{W}_0,\p{W}_1,\ldots,\p{W}_n\in{\mathbb E}^d$, compute all points
$\p{V}_0,\p{V}_1,\ldots,\p{V}_n\in{\mathbb E}^d$,
(cf.~\eqref{E:Subdivision} and~\eqref{E:PointsV}), significantly faster than in 
$O(dn^2)$ time.
\end{problem}

The article is organized as follows. In Section~\ref{S:T-Solution}, we show 
that Problem~\ref{P:MainProblem} can be solved in $O(dn\log{n})$ time using 
the \textit{fast Fourier transform} (the FFT in short) and its \textit{inverse} 
(the IFFT). 

Note that the new method has an important property. Suppose that, for a given 
$c\in(0,1)$, the left segment of a $d$-dimensional B\'{e}zier curve of degree 
$n$, with control points $\p{W}_0,\p{W}_1,\ldots,\p{W}_n$ 
(cf.~Problem~\ref{P:MainProblem}), has already been computed using the new 
approach. Then, the solution to the analogous problem for $n+1$---that is, 
for the B\'{e}zier curve of degree $n+1$ in ${\mathbb E}^d$ with control points 
$\p{W}_0,\p{W}_1,\ldots,\p{W}_n$ and $\p{W}_{n+1}$---can be obtained using 
only $O(d)$ additional arithmetic operations.

Unfortunately, the solution proposed in Section~\ref{S:T-Solution} appears to 
be mainly theoretical, as its direct implementation performs well only for 
small values of $n$ due to numerical instability. Therefore, in
Section~\ref{S:P-Solution}, we propose a slightly modified algorithm having 
the same computational complexity but considerably improved numerical 
robustness, making it effective even for fairly large values of $n$. 

In Section~\ref{S:OtherApp}, we apply a similar idea to accelerate the 
subdivision of rational B\'{e}zier curves and rectangular B\'{e}zier surfaces. 
Additionally, we introduce a method for computing the derivatives of polynomial 
B\'{e}zier curves more efficiently than with the traditional de Casteljau 
algorithm.

The results of the numerical experiments are presented in 
Section~\ref{S:Examples}, where we also propose a \textit{hybrid method} for 
subdividing B\'{e}zier curves.

\section{The theoretical solution}                        \label{S:T-Solution}

Recall that our goal is to compute all the points~\eqref{E:PointsV} more 
efficiently than by using the de Casteljau algorithm. We now state the main 
theoretical result of the paper.

\begin{theorem}\label{T:MainThm}
Problem~\ref{P:MainProblem} can be solved in $O(dn\log{n})$ time.
\end{theorem}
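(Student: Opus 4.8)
The plan is to recast Problem~\ref{P:MainProblem}---the simultaneous computation of $\p{V}_0,\p{V}_1,\ldots,\p{V}_n$ in~\eqref{E:PointsV}---as a single discrete convolution of two sequences of length $n+1$, and then to evaluate that convolution with the FFT and the IFFT. Everything hinges on one elementary observation: expanding the Bernstein polynomials in~\eqref{E:PointsV} via~\eqref{E:Def_BernPoly} and writing the binomial coefficient symmetrically, for every $k=0,1,\ldots,n$,
\[
\p{V}_k=\sum_{i=0}^{k}\frac{k!}{i!\,(k-i)!}\,c^{\,i}(1-c)^{k-i}\p{W}_i
       =k!\sum_{i=0}^{k}\p{A}_i\,b_{k-i},
\qquad
\p{A}_i:=\frac{c^{\,i}}{i!}\,\p{W}_i\in{\mathbb E}^d,\quad
b_j:=\frac{(1-c)^{\,j}}{j!}.
\]
Thus $\p{V}_k/k!$ is exactly the $k$th term of the convolution of the vector-valued sequence $(\p{A}_0,\ldots,\p{A}_n)$ with the scalar sequence $(b_0,\ldots,b_n)$. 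Spotting this convolution structure is really the only idea in the proof; the rest is routine bookkeeping.

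Concretely, I would carry out four steps. (i)~Precompute $0!,1!,\ldots,n!$ together with the powers $c^0,c^1,\ldots,c^n$ and $(1-c)^0,(1-c)^1,\ldots,(1-c)^n$ by their one-term recurrences---$O(n)$ scalar operations---and then assemble the $d$-vectors $\p{A}_i=(c^{\,i}/i!)\p{W}_i$ and the scalars $b_j$ in $O(dn)$ operations. (ii)~Fix $N$, a power of two with $N\geq 2n+1$, so that $N=O(n)$; zero-pad each of the $d$ coordinate sequences of $(\p{A}_0,\ldots,\p{A}_n)$ and the scalar sequence $(b_0,\ldots,b_n)$ to length $N$, and apply the FFT to each---$d+1$ transforms, each costing $O(N\log N)=O(n\log n)$, hence $O(dn\log n)$ in total. (iii)~Multiply the transform of $(b_j)$ pointwise into the transform of each of the $d$ coordinate sequences of $(\p{A}_i)$ ($O(dn)$ multiplications), and apply the IFFT to each of the $d$ products ($O(dn\log n)$ altogether); by the convolution theorem this yields $\p{S}_k:=\sum_{i=0}^{k}\p{A}_i\,b_{k-i}\in{\mathbb E}^d$ for $k=0,1,\ldots,2n$, of which we keep those with $k\leq n$. (iv)~Set $\p{V}_k:=k!\,\p{S}_k$ for $k=0,1,\ldots,n$, at cost $O(dn)$. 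Summing the four contributions gives the asserted $O(dn\log n)$ bound. (Since the FFT works over ${\mathbb C}$, one simply regards ${\mathbb E}^d$ as sitting coordinatewise inside ${\mathbb C}^d$ and discards the---in exact arithmetic, vanishing---imaginary parts at the end.)

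The hard part is not the complexity estimate at all: once the convolution form is in hand, the only thing to watch is the accounting, i.e.\ making sure that each of (i)--(iv) stays within $O(dn\log n)$ and, in particular, that no power or factorial is recomputed inside a loop. What genuinely deserves emphasis is that the scalars $c^{\,i}/i!$, $(1-c)^{\,j}/j!$ and $k!$ produced along the way range over an enormous scale of magnitudes, so that although the \emph{operation count} is $O(dn\log n)$, the procedure exactly as described is numerically fragile; repairing this without spoiling the complexity is precisely the subject of Section~\ref{S:P-Solution}. I would also note, as an immediate by-product of the convolution form above, that $\p{V}_k$ depends only on $\p{W}_0,\ldots,\p{W}_k$, so enlarging the data by one control point $\p{W}_{n+1}$ leaves $\p{V}_0,\ldots,\p{V}_n$ untouched and requires producing only the single new point $\p{V}_{n+1}$---which is what underlies the $O(d)$-update remark made earlier.
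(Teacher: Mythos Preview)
Your proof is correct and follows essentially the same route as the paper: the key identity $\p{V}_k=k!\sum_{i=0}^k\p{A}_i\,b_{k-i}$ with $\p{A}_i=(c^i/i!)\p{W}_i$ and $b_j=(1-c)^j/j!$ is exactly the paper's observation~\eqref{E:Def-v_k}--\eqref{E:FinalStep}, phrased as a convolution rather than as a product of the two polynomials~\eqref{E:Def-a_n-b_n}, and the FFT/IFFT step is identical. Your additional remarks on numerical fragility and on the $O(d)$ update when appending $\p{W}_{n+1}$ also match the paper's Section~\ref{S:P-Solution} and Remark~\ref{R:Remark2}.
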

\begin{proof}
Let us consider the case of $d=1$. If $d>1$, we can proceed in 
a component-wise manner.

For a fixed $n\in\mathbb N$, we have $v_k$ as the one-dimensional version of 
$\p{V}_k$: 
\begin{equation}\label{E:Def-v_k}
v_k\equiv v_k(c)=k!\sum_{i=0}^k\frac{w_ic^i}{i!}\cdot
                                     \frac{(1-c)^{k-i}}{(k-i)!}
                                                  \qquad(k=0,1,\ldots,n)
\end{equation}
(cf.~~\eqref{E:PointsV}, \eqref{E:Def_BernPoly}), where $c\in(0,1)$, and 
$w_i\in\mathbb R$ $(0\leq i\leq n)$ are the \textit{scalar} control points.

Let us define the following two polynomials of degree $n$ in the variable $x$:
\begin{equation}\label{E:Def-a_n-b_n}
a_n(x):=\sum_{i=0}^{n}\alpha_ix^i,\qquad 
b_n(x):=\sum_{j=0}^{n}\beta_jx^j,
\end{equation}
where
\begin{equation}\label{E:Def-alpha_i-beta_i}
\alpha_i:=\frac{w_i}{i!}c^i,\qquad 
\beta_j:=\frac{1}{j!}(1-c)^j\qquad (0\leq i,j\leq n).
\end{equation}
Define $g_n(x):=a_n(x)b_n(x)$. Certainly, the power form of the 
polynomial $g_n$ is 
\begin{equation}\label{E:Def-g_n-poly}
g_n(x)=\sum_{k=0}^{2n}\gamma_kx^k
\end{equation}
with
\begin{equation}\label{E:Def-gamma_k}
\gamma_k:=\sum_{i=0}^{k}\alpha_i\cdot \beta_{k-i}\qquad (0\leq k\leq 2n),
\end{equation}
where we set $\alpha_\ell:=0$, $\beta_\ell:=0$ if $\ell>n$.

Now, observe that
\begin{equation}\label{E:FinalStep}
v_k=k!\gamma_k\qquad \mbox{for}\qquad k=0,1,\ldots,n.
\end{equation}
It is well-known that all the coefficients $\gamma_k$ $(0\leq k\leq 2n)$ can be 
computed using fast polynomial multiplication---via the fast Fourier 
transform and its inverse---which has computational complexity $O(n\log{n})$. 
See, e.g., \cite[\S30]{Cormen}. This completes the proof.
\end{proof}

\begin{corollary}\label{C:Corollary1}
Theorem~\ref{T:MainThm} implies that the asymptotic complexity of the 
B\'{e}zier curve subdivision problem does not exceed the asymptotic lower bound 
of polynomial multiplication.
\end{corollary}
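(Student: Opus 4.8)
The plan is to establish the corollary as an almost immediate consequence of Theorem~\ref{T:MainThm} together with the well-known reduction of polynomial multiplication to B\'{e}zier subdivision. First I would make precise what is being claimed: let $M(n)$ denote the best-known (or any) asymptotic bound for multiplying two polynomials of degree $n$ with real coefficients, and let $S(n)$ denote the asymptotic cost of subdividing a one-dimensional B\'{e}zier curve of degree $n$. Theorem~\ref{T:MainThm} already gives $S(n)=O(n\log n)$; the point of the corollary is the qualitative comparison $S(n)=O(M(n))$, i.e.\ subdivision is never asymptotically harder than polynomial multiplication, regardless of which multiplication algorithm one plugs in.

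Next I would give the reduction underlying the proof of the theorem in reverse, but now keeping $M$ abstract. The proof of Theorem~\ref{T:MainThm} expresses the scaled control points $v_k=k!\,\gamma_k$ in terms of the coefficients $\gamma_k$ of the product $g_n(x)=a_n(x)b_n(x)$ of two explicitly constructed degree-$n$ polynomials (cf.~\eqref{E:Def-a_n-b_n}--\eqref{E:FinalStep}). Forming the coefficients $\alpha_i=w_ic^i/i!$ and $\beta_j=(1-c)^j/j!$ costs $O(n)$ arithmetic operations (the powers $c^i$ and $(1-c)^j$ and the factorials are built by running products), multiplying $a_n$ by $b_n$ costs $M(n)$ by definition, and recovering $v_k=k!\,\gamma_k$ costs another $O(n)$. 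Hence $S(n)=O(n)+M(n)=O(M(n))$, since $M(n)=\Omega(n)$ trivially. For the $d$-dimensional case one simply repeats this componentwise, giving the overall bound $O(dM(n))$, but the coefficient comparison with $M(n)$ is the content of the statement.

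The only genuinely delicate point is making sure the reduction is legitimate as a \emph{black-box} reduction: we must divide and multiply by factorials, which requires $c$ and the $w_i$ to be elements of a field (or at least a ring in which the relevant integers are invertible), and we must be sure these $O(n)$ preprocessing and postprocessing operations are charged in the same arithmetic model in which $M(n)$ is stated. This is routine, since all the standard fast-multiplication bounds (FFT-based, Karatsuba, Toom--Cook, Sch\"onhage--Strassen) are stated in an arithmetic or bit-complexity model that comfortably absorbs $O(n)$ extra field operations; I would simply remark that the factorials and powers involved have size polynomial in $n$ and so contribute lower-order terms even in a bit-complexity accounting.

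Finally, I would phrase the conclusion: combining Theorem~\ref{T:MainThm} with the reduction above shows $S(n)=O(M(n))$, i.e.\ the asymptotic complexity of B\'{e}zier curve subdivision is bounded above by that of polynomial multiplication; equivalently, it does not exceed the (currently known) asymptotic lower bound for the latter. I expect no real obstacle here—the work was already done inside the proof of Theorem~\ref{T:MainThm}—so the ``proof'' is essentially a one-paragraph observation that the construction in that proof is an $O(n)$-overhead reduction from subdivision to multiplication.
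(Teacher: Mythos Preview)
Your proposal is correct and matches the paper's (implicit) approach: the paper states the corollary without proof, treating it as an immediate consequence of the proof of Theorem~\ref{T:MainThm}, and your argument correctly makes explicit that the construction there is an $O(n)$-overhead reduction from subdivision to polynomial multiplication, whence $S(n)=O(M(n))$ for any multiplication bound $M(n)=\Omega(n)$. The extra care you take about the computational model and bit complexity is reasonable but goes beyond what the paper itself provides.
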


Let us point out interesting properties of the approach used to prove  
Theorem~\ref{T:MainThm}.

\begin{remark}\label{R:Remark1}
Observe that in the general case of $d>1$, it suffices to perform the FFT once 
for the input related to the polynomial $b_n$ (see~\eqref{E:Def-a_n-b_n}) as
it does not depend on the control points. The result can then be used for all 
dimensions.
\end{remark}

\begin{remark}\label{R:Remark2}
Again, without loss of generality, we assume that $d=1$. For a fixed natural 
$n$, suppose we have just computed all the coefficients $\gamma_k$ for 
$k=0,1,\ldots,2n$ (cf.~\eqref{E:Def-g_n-poly}, \eqref{E:Def-gamma_k}) using 
the FFT, the IFFT and the method described in the proof of
Theorem~\ref{T:MainThm}. If we increase $n$ by $1$ then the values
$v_0,v_1,\ldots,v_n$ (see Eq.~\eqref{E:Def-v_k}) remain unchanged, and we only
need to compute $v_{n+1}$. Note that this can be done in $O(1)$ time
because 
$$
v_{n+1}=w_0(1-c)^{n+1}+(n+1)!\gamma_{n+1}+w_{n+1}c^{n+1}.
$$
Thus, in the general case, one can update the result in $O(d)$ time.

Analogously, note that we also have
$$
v_{n+2}=w_0(1-c)^{n+2}+(n+2)w_1c(1-c)^{n+1}+(n+2)!\gamma_{n+2}+
                                   (n+2)w_{n+1}c^{n+1}(1-c)+w_{n+2}c^{n+2},
$$
which may be useful if we want to increase $n$ by $2$, and so on. 
\end{remark}

\begin{corollary}\label{C:Corollary2}
When increasing $n$ by $m$ $(1\leq m\leq n)$, using the method described in
Remark~\ref{R:Remark2}, one gets the subdivision in $O(dm^2)$ additional time,
i.e., $O(d(n\log{n}+m^2))$ in total. This may be of interest in practical 
applications if $m^2$ is small enough compared to $n$ or if the FFT 
implementation of choice is tailored to certain values of $n$.
\end{corollary}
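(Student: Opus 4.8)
The plan is to combine the two ideas already established in the paper: first, the $O(dn\log n)$ algorithm from the proof of Theorem~\ref{T:MainThm} that produces the coefficients $\gamma_0,\gamma_1,\ldots,\gamma_{2n}$ and hence the subdivision points $v_0,v_1,\ldots,v_n$; and second, the incremental update formula of Remark~\ref{R:Remark2}. So the proof is really just a complexity bookkeeping argument built on top of those two results.

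First I would observe that we may assume $d=1$, since the general case only multiplies every cost by $d$ (and, by Remark~\ref{R:Remark1}, even less in practice, though we do not need that here). Running the FFT-based method of Theorem~\ref{T:MainThm} once gives all coefficients $\gamma_k$ for $0\le k\le 2n$ in $O(n\log n)$ time. Next, to pass from degree $n$ to degree $n+m$, the new subdivision points that must be computed are $v_{n+1},v_{n+2},\ldots,v_{n+m}$, while $v_0,\ldots,v_n$ are unchanged. I would then write the analogue of the two explicit formulas displayed in Remark~\ref{R:Remark2}: each $v_{n+j}$ (for $1\le j\le m$) is an expression built from the already-known coefficients $\gamma_{n+1},\ldots,\gamma_{n+m}$ together with a boundary correction coming from the fact that in the product $g_{n+m}=a_{n+m}b_{n+m}$ the relevant coefficient of $x^{n+j}$ picks up only $O(j)$ extra terms involving the new control points $w_{n+1},\ldots,w_{n+m}$ and the old ones near the ends. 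The point is that computing $v_{n+j}$ costs $O(j)$ arithmetic operations, so the total extra work is $\sum_{j=1}^{m}O(j)=O(m^2)$.

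Adding the two contributions gives $O(n\log n + m^2)$ for $d=1$ and hence $O(d(n\log n + m^2))$ in general, which is exactly the claimed bound; the condition $1\le m\le n$ simply guarantees that $n\log n$ is still a meaningful dominant term and that the problem sizes line up. The main technical point to get right is the boundary-correction part of the formula for $v_{n+j}$: one has to argue that the coefficient $\gamma_{n+j}$ computed for degree $n+m$ differs from what a truncated degree-$n$ computation would give by only $O(j)$ terms, and to exhibit those terms explicitly (as Remark~\ref{R:Remark2} does for $j=1,2$). Once that pattern is established, the remaining cost accounting is immediate. I expect this boundary bookkeeping — keeping straight which products $\alpha_i\beta_{k-i}$ are "new" when the degree grows — to be the only place requiring care; everything else is a one-line geometric-series-style summation.
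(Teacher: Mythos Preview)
Your proposal is correct and is exactly the argument the paper has in mind: the corollary is stated immediately after Remark~\ref{R:Remark2} with no separate proof, and your bookkeeping (reduce to $d=1$, use the already computed $\gamma_{n+1},\ldots,\gamma_{2n}$, correct each $v_{n+j}$ by the $2j$ missing boundary terms, then sum $\sum_{j=1}^m O(j)=O(m^2)$) is precisely the intended justification. The only thing to tidy up is your phrasing ``the coefficient $\gamma_{n+j}$ computed for degree $n+m$'': what you actually use is the coefficient $\gamma_{n+j}$ already produced by the degree-$n$ FFT step, and the $O(j)$ correction terms are exactly the $\alpha_i\beta_{n+j-i}$ with $i<j$ or $i>n$ that this truncated computation omits.
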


\section{The practical solution}                          \label{S:P-Solution}

Regrettably, experiments have shown that the new approach proposed in the 
previous section is of rather theoretical interest, as it is numerically 
unstable. Hence, in practical applications, it is effective only for small 
values of $n$. 

There are probably two main reasons for such behavior of the method: 
\textit{(i)} the \textit{input signals} for the FFT, i.e., the 
coefficients~\eqref{E:Def-alpha_i-beta_i}, vanish quite rapidly for higher $i$,
$j$; \textit{(ii)} the \textit{output signals} of the IFFT, i.e., the
coefficients~\eqref{E:Def-gamma_k} (cf.~\eqref{E:Def-g_n-poly}), are quite 
small and have to be multiplied by a factorial in the final stage 
(cf.~\eqref{E:FinalStep}). 

Fortunately, it is possible to slightly modify the new method and mitigate 
this drawback. To do this, we introduce the \textit{scaling factor}, which 
allows us to amplify the input signals for the FFT and to overcome the 
numerical limitations listed above. However, let us stress that, in practice, 
its numerical efficiency may depend on the chosen implementation of the FFT and 
the IFFT.


As in the proof of Theorem~\ref{T:MainThm}, we consider 
Problem~\ref{P:MainProblem} in the case of $d=1$. If $d>1$, one simply applies 
the component-wise strategy. 

Now, given the scaling factor $s\neq 0$, and $c\in(0,1)$, $n\in\mathbb N$,
$w_i\in\mathbb R$ $(0\leq i\leq n)$, we rewrite~\eqref{E:Def-v_k} 
as
\begin{equation}\label{E:Def-v_k-scal}
v_k\equiv v_k(c)=\frac{k!}{s^k}
                    \sum_{i=0}^k\frac{s^iw_ic^i}{i!}\cdot
                                 \frac{s^{k-i}(1-c)^{k-i}}{(k-i)!}, 
\end{equation}
where $k=0,1,\ldots,n$. 

Consequently, we define
\begin{eqnarray}
\nonumber
&&a^s_n(x):=\sum_{i=0}^{n}\alpha^s_ix^i,\qquad
             \alpha^s_i:=s^i\alpha_i\qquad (0\leq i\leq n),\\[1ex]
&&
\label{E:Def-b^s_n}
b^s_n(x):=\sum_{j=0}^{n}\beta^s_jx^j,\qquad
             \beta^s_j:=s^j\beta_j\qquad (0\leq j\leq n)
\end{eqnarray}
(cf.~\eqref{E:Def-alpha_i-beta_i}), and $g^s_n(x):=a^s_n(x)b^s_n(x)$.

So, we have
$$
g^s_n(x)=\sum_{k=0}^{2n}\gamma^s_kx^k,
$$
where
\begin{equation}\label{E:Def-gamma^s_k}
\gamma^s_k:=\sum_{i=0}^{k}\alpha^s_i\cdot
                          \beta^s_{k-i}\qquad (0\leq k\leq 2n),
\end{equation}
and we set $\alpha^s_\ell:=0$, $\beta^s_\ell:=0$ if $\ell>n$.

Finally, for $k=0,1,\ldots,n$, we obtain
$$
v_k=s^{-k}k!\gamma^s_k.
$$
Certainly, if $s=1$, nothing has changed compared to 
Section~\ref{S:T-Solution}.

As previously recalled, using the FFT and the IFFT, all the coefficients 
$\gamma^s_0,\gamma^s_1,\ldots,\gamma^s_{2n}$ can be computed in $O(n\log{n})$
time. This implies that the required quantities $v_0, v_1,\ldots,v_n$ can also 
be found with the same computational complexity. 

It is worth noting that similar observations to those given in
Remark~\ref{R:Remark1} and Remark~\ref{R:Remark2} also apply in the 
\textit{scaling case}.

\begin{algorithm}[ht!]
\caption{Finding the left segment of the B\'{e}zier curve~\eqref{E:BezierCurve}
         for $c\in(0,1)$ using the scaling factor $s\neq0$ via the FFT and 
                                                       the IFFT.}\label{A:Alg1}
\begin{algorithmic}[1]
\Procedure {FastSubdivisionBC-L}{$c, s, [\p{W}_0, \p{W}_1, \ldots, \p{W}_n]$}
\State $c_1 \gets 1 - c$
\State $b_0 \gets 1$
\State $a_0 \gets 1$
\State $f_0 \gets 1$
\For {$i \gets 1, \ldots, n$}
  \State $a_i \gets a_{i-1} \cdot s \cdot c / i$
  \State $b_i \gets b_{i-1} \cdot s \cdot c_1 / i$
  \State $f_i \gets f_{i-1} \cdot i / s$
\EndFor
\State $B\_FFT \gets \mbox{\textsf{RFFT}}([b_0,b_1,\ldots,b_n], 2n+1)$
                     \Comment{Real FFT evaluating at $2n+1$ points.}
\For {$k \gets 1, \ldots, d$} 
                      \Comment{$d$ -- the dimension of control points.}
  \For {$i \gets 0, \ldots, n$} 
                      \Comment{$n+1$ -- the number of control points.}
    \State $w_i \gets a_i \cdot \p{W}_i[k]$
  \EndFor
  \State $A\_FFT \gets \mbox{\textsf{RFFT}}([w_0,w_1,\ldots,w_n], 2n+1)$  
  \State $C\_FFT \gets A\_FFT \cdot B\_FFT$
                     \Comment{Elementwise multiplication.}
  \State $X \gets \mbox{\textsf{IRFFT}}(C\_FFT, 2n+1)$
                     \Comment{Real IFFT evaluating at $2n+1$ points.}
  \State $result[k] \gets [f_0,f_1,\ldots,f_n]\cdot X$
                     \Comment{Elementwise multiplication.}
\EndFor
\State \Return $result$
       \Comment{$\p{V}_i\equiv[result[1,i],result[2,i],\ldots,result[d,i]]$.}

\EndProcedure
\end{algorithmic}
\end{algorithm}


The new scaling factor-based approach for solving Problem~\ref{P:MainProblem} 
in its general form (i.e., for $d\in\mathbb N$) in $O(dn\log{n})$ time, using
Remark~\ref{R:Remark1}, is summarized in Algorithm~\ref{A:Alg1}. To optimize 
the method, we propose to use \textit{real} versions of FFT and IFFT (see, 
e.g., \cite{SJHB1987-corr}, \cite{SJHB1987}), as the \textit{input} and 
\textit{output} control points (cf.~\eqref{E:BezierCurve}, 
\eqref{E:Subdivision}, \eqref{E:PointsV}) have real coordinates. 

Examples illustrating the behavior of the new scaling factor-based method in
\textsf{Python} for a \textit{good choice} of $s$ are given in 
Section~\ref{S:Examples}.

\begin{remark}\label{R:LowDegrees}
For low degrees, it can be more efficient to find the control points of 
a fragment of a B\'{e}zier curve by directly using Eq.~\eqref{E:PointsV}. For 
example, for $n=2$, one would have to evaluate
$$
\p{V}_0\equiv\p{W}_0,\qquad 
\p{V}_1=(1-c)\p{W}_0+c\p{W}_1,\qquad 
\p{V}_2=(1-c)^2\p{W}_0+ 2c(1-c)\p{W}_1+c^2\p{W}_2,
$$
which, if the computations are organized well, can be done using $8d+2$ 
arithmetic operations instead of $9d+1$ for the de Casteljau algorithm. 
Similarly, for cubic curves, one can reduce the cost from $18d+1$ operations to 
$14d+5$.
\end{remark}

\section{Other applications}                                \label{S:OtherApp}

In this section, we briefly outline the possible applications of the new method 
for subdividing polynomial B\'{e}zier curves. More precisely, we consider: 
\textit{(i)} computing the derivatives of polynomial B\'{e}zier curves; 
\textit{(ii)} subdividing rational B\'{e}zier curves; and \textit{(iii)} 
subdividing rectangular B\'{e}zier surfaces --- all in a fast and efficient 
manner. 


\subsection{Derivatives of polynomial B\'{e}zier curves}     \label{SS:DiffBC}

In many applications, it is necessary to compute the derivatives of B\'{e}zier 
curves. For example, to find the vectors
\begin{equation}\label{E:BezierDerivatives}
P^{(1)}_n(t), P^{(2)}_n(t),\ldots, P_n^{(r)}(t)\in\mathbb R^d
\end{equation}
(cf.~\eqref{E:BezierCurve}) for a given $t\in[0,1]$ and $1\leq r\leq n$, we can
also use the de Casteljau algorithm~\eqref{E:DC-alg}. In that case, the 
computational complexity is $O(d(n^2 + r^2))$. New, faster methods achieving 
$O(drn)$ time complexity have recently been proposed by the authors. 
For details, see~\cite{ChW2024}, where the problem of computing the derivatives
of rational B\'{e}zier curves is considered as well.

It seems that the use of the FFT and the IFFT allows to accelerate the new 
methods proposed in the mentioned article. In particular, Algorithm 2.2 
from~\cite{ChW2024}, which computes all the vectors~\eqref{E:BezierDerivatives} 
in $O(drn)$ time, may be optimized. However, the modification is rather 
complex, and we leave it for future research.

For this reason, we restrict ourselves to the problem of 
computing all the vectors~\eqref{E:BezierDerivatives} at the ends of a curve
(i.e., for $t\in\{0,1\}$) which is of great importance, for example, when 
joining a sequence of polynomial B\'{e}zier curves smoothly.

It is simple to check that
\begin{eqnarray}
&&\label{E:Derivative0}
P_n^{(k)}(0)=\dfrac{n!}{(n-k)!}\sum_{i=0}^k\binom{k}{i}(-1)^{k-i}\p{W}_i,\\
&&\label{E:Derivative1}
P_n^{(k)}(1)=\dfrac{n!}{(n-k)!}\sum_{i=n-k}^n\binom{k}{n-i}(-1)^{n-i}\p{W}_i,
\end{eqnarray}
where $k=0,1,\ldots,n$. See, e.g., \cite[Eq.~(5.25) and (5.26)]{Farin2002}.
Let us focus on~\eqref{E:Derivative0}, as the method for~\eqref{E:Derivative1} 
can be derived similarly.

Certainly, the vectors $P_n^{(k)}(0)\in\mathbb R^d$ $(1\leq k\leq n)$ can be
derived component-wise. So, let us assume $d=1$. Observe that to evaluate all 
the sums
$$
\sum_{i=0}^k\binom{k}{i}z_i\qquad (z_i\in\mathbb R) 
$$
for all $k=0,1,\ldots,n$, where $n\in\mathbb N$ is fixed, it is possible to
apply the approach from Section~\ref{S:P-Solution} (cf.~\eqref{E:Def-v_k-scal}; 
see also Section~\ref{S:T-Solution} and~\eqref{E:Def-v_k}). Indeed, it is 
enough to set $c:=\tfrac{1}{2}$ there, because 
$B^k_i(\tfrac{1}{2})=2^{-k}\binom{k}{i}$ $(0\leq i\leq k)$.

To sum up, one can compute all the vectors 
$P_n^{(k)}(0), P_n^{(k)}(1)\in\mathbb R^d$ for all $1\leq k\leq n$ with 
computational complexity $O(dn\log{n})$. This approach is faster than the use 
of the de Casteljau algorithm or the methods from~\cite{ChW2024}, all of which 
run in $O(dn^2)$ time. 

\subsection{Rational B\'{e}zier curves}                       \label{SS:RatBC}

Let us consider a \textit{rational B\'{e}zier curve 
$\p{R}_n:[0,1]\rightarrow{\mathbb E}^d$ $(d\in\mathbb N)$ of degree 
$n\in\mathbb N$} with the \textit{control points} 
$\p{W}_0,\p{W}_1,\ldots,\p{W}_n\in{\mathbb E}^d$, and the corresponding 
\textit{weights} $\omega_0,\omega_1,\ldots,\omega_n>0$, 
\begin{equation*}
\p{R}_n(t):=\frac{\displaystyle\sum_{i=0}^{n}\omega_iB^n_i(t)\p{W}_i}
                 {\displaystyle \sum_{i=0}^{n}\omega_iB^n_i(t)}
                                                       \qquad (0\leq t\leq 1).
\end{equation*}

A point on the parametric curve $\p{R}_n$ can be evaluated using the 
\textit{rational de Casteljau algorithm}, which has a computational complexity 
$O(dn^2)$. Note that the rational de Casteljau algorithm is significantly more 
expensive compared to its polynomial version, as it requires $O(n^2)$ 
divisions. Similarly to the polynomial case, this method also subdivides a 
rational B\'{e}zier curve into two segments in $O(dn^2)$ time.

For example, for a given $c\in(0,1)$, the \textit{left part} of the curve 
$\p{R}_n$, i.e., the rational B\'{e}zier curve 
$\p{R}^L_n([0,1]):=\p{R}_n([0,c])$, has the following form:
\begin{equation*}
\p{R}^L_n(t)=\dfrac{\displaystyle\sum_{i=0}^{n}\nu_iB^n_i(t)\p{V}_i}
                    {\displaystyle\sum_{i=0}^{n}\nu_iB^n_i(t)}
                                                       \qquad (0\leq t\leq 1),
\end{equation*}
where
\begin{equation}\label{E:Def-nu_k-p_V}
\nu_k\equiv\nu_k(c):=\sum_{i=0}^{k}\omega_iB^k_i(c),
\qquad 
\p{V}_k\equiv\p{V}_k(c):=\frac{1}{\nu_k}\sum_{i=0}^{k}\omega_iB^k_i(c)\p{W}_i
\qquad (0\leq k\leq n).
\end{equation}
As already mentioned, all these weights and points can be found via the 
rational de Casteljau algorithm in $O(dn^2)$ time. 

For a justification of the given facts, further information on rational 
B\'{e}zier curves and the rational de Casteljau algorithm, see, e.g., 
\cite[\S13]{Farin2002}.

It should be noted that the new geometric linear-time method~\cite{WCh2020},
derived by the authors, allows for faster computation of a point on a rational 
B\'{e}zier curve, namely in $O(dn)$ time. However, subdividing a rational 
B\'{e}zier curve into two parts using this new method still requires $O(dn^2)$ 
arithmetic operations. See~\cite[p.~59]{FChPhD}.

It is not difficult to see that all the weights and the control
points~\eqref{E:Def-nu_k-p_V} can be computed with computational complexity 
$O(dn\log{n})$. Indeed, we first need to determine $\nu_0,\nu_1,\ldots,\nu_n$ 
using the new method proposed in Section~\ref{S:P-Solution} 
(cf.~\eqref{E:Def-v_k-scal}; see also Section~\ref{S:T-Solution} 
and~\eqref{E:Def-v_k}). Next, we can compute $\p{V}_0,\p{V}_1,\ldots,\p{V}_n$ 
in a component-wise manner using the new approach again.

\subsection{Rectangular B\'{e}zier surfaces}                  \label{SS:RetBS}

The \textit{rectangular B\'{e}zier surface} 
$\p{S}_{nm}:[0,1]^2\rightarrow\mathbb E^d$ $(d\in\mathbb N)$ \textit{of degree} 
$(n,m)\in\mathbb N^2$ with the \textit{control points} 
$\p{W}_{ij}\in{\mathbb E}^d$ $(0\leq i\leq n,\; 0\leq j\leq m)$ is defined by
$$
\p{S}_{nm}(t,u):=\sum_{i=0}^n\sum_{j=0}^mB^n_i(t)B^m_j(u)\p{W}_{ij}
                                                       \qquad(0\leq t,u\leq1).
$$
To evaluate a point on a rectangular B\'{e}zier surface, one can use the de 
Casteljau algorithm multiple times. The computational complexity in this case 
is $O(d\overline{M}(\underline{M}^2+\overline{M}))$, where 
$\underline{M}:=\min(n,m)$, $\overline{M}:=\max(n,m)$. See, e.g., 
\cite[\S14]{Farin2002}.  

Recently, the authors have proposed faster methods of computing a point on 
the surface $\p{S}_{nm}$, which work in $O(dnm)$ time. See~\cite{WCh2020} and
\cite[\S2.4]{FChPhD}.

The problem of subdividing rectangular B\'{e}zier surfaces is important in 
practical applications, and for this reason, it has been considered in the 
literature. Note that the de Casteljau algorithm is also helpful in solving 
this task. For details, see, e.g., \cite[\S14.7 and \S16.1]{Farin2002}.

Let us restrict ourselves only to the following case of rectangular B\'{e}zier 
surface subdivision. Consider the rectangular B\'{e}zier surface 
$\p{S}^L_{nm}$ representing the \textit{left part} of the patch $\p{S}_{nm}$, 
i.e., $\p{S}^L_{nm}([0,1],[0,1]):=\p{S}_{nm}([0,c],[0,1])$, where $c\in(0,1)$ 
is fixed. 

It can be checked that
$$
\p{S}^{L}_{nm}(t,u)=\sum_{i=0}^n\sum_{j=0}^mB^n_i(t)B^m_j(u)\p{V}_{ij}
                                                       \qquad(0\leq t,u\leq1),
$$
where the points $\p{V}_{ij}\in\mathbb E^d$ are given by
\begin{equation}\label{E:Def-V_ij}
\p{V}_{ij}\equiv\p{V}_{ij}(c):=\sum_{k=0}^iB^i_k(c)\p{W}_{kj}
                                    \qquad (0\leq i\leq n,\; 0\leq j\leq m).
\end{equation}
These quantities can be computed using the de Casteljau algorithm $m+1$ 
times in $O(dmn^2)$ time (cf.~Section~\ref{S:Problem}). 

However, it is now clear that the new method from Section~\ref{S:P-Solution} 
(cf.~\eqref{E:Def-v_k-scal}; see also Section~\ref{S:T-Solution} 
and~\eqref{E:Def-v_k}) allows us to derive all the points~\eqref{E:Def-V_ij} 
with the computational complexity $O(dmn\log{n})$. 

\begin{remark}\label{R:Remark3}
Using the new approach, one can take advantage of the fact that we need to 
compute $m+1$ sums of the form~\eqref{E:Def-V_ij} for different sets of control 
points which all correspond to the same value of $c$. This suggests that a 
notable improvement can be achieved. Indeed, it suffices to run the FFT once 
for the input related to the polynomial $b^s_n$ (see~\eqref{E:Def-b^s_n}; or 
to the polynomial $b_n$, cf.~\eqref{E:Def-a_n-b_n}), and the result can then be 
applied to all dimensions (cf.~Remark~\ref{R:Remark1}) and for all 
$\ell=0,1,\ldots,m$.
\end{remark}

By combining the approaches given in this and the previous paragraph, a similar 
method for fast subdivision of \textit{rational rectangular B\'{e}zier 
surfaces} (see, e.g., \cite[\S16.6]{Farin2002}; cf.~also~\S\ref{SS:RatBC}) can 
be formulated as well.

\section{Numerical tests}                                   \label{S:Examples}

The following experiments have been performed. The results have been obtained 
on a~computer with \texttt{Intel Core i5-10600 CPU} at \texttt{3.30GHz} 
processor and \texttt{16GB} \texttt{RAM}, using \textsf{Python 3.14.4} 
(double precision) and \texttt{numpy.fft.rfft} (real FFT), 
\texttt{numpy.fft.irfft} (real IFFT) \textsf{NumPy 2.3.5} functions 
(for \textit{real} versions of FFT and IFFT, see, e.g., \cite{SJHB1987-corr},
\cite{SJHB1987}). The source code which was used to perform the tests is 
available at 
\url{https://github.com/filipchudy/bezier-subdivision-fft/tree/main}.

\begin{table*}[ht!]
\begin{center}
\renewcommand{\arraystretch}{1.15}
\begin{tabular}{lccc}
$n$ & de~Casteljau & Alg.~\ref{A:Alg1} & Eq.~\eqref{E:Def-gamma_k} \\ \hline
2 & \bf{3.43} & 11.08 & 5.14 \\ 
3 & \bf{4.64} & 11.45 & 5.95 \\ 
4 & \bf{5.89} & 11.69 & 6.77 \\ 
5 & \bf{7.17} & 11.96 & 7.54 \\ 
6 & 8.42 & 12.24& \bf{8.40} \\ 
7 & 9.67 & 12.21& \bf{9.16} \\ 
8 & 10.93 & 12.62 & \bf{10.00} \\ 
9 & 12.22 & 12.88 & \bf{10.76} \\ 
10 & 13.52 & 12.97 & \bf{11.53} \\ 
11 & 14.75 & 15.28 & \bf{12.32} \\ 
12 & 16.04 & 15.21 & \bf{13.23} \\ 
13 & 17.30 & 15.41 & \bf{13.93} \\ 
14 & 18.51 & 16.34 & \bf{14.76} \\ 
15 & 19.85 & 16.65 & \bf{15.52} \\ 
16 & 21.07 & 16.57 & \bf{16.31} \\ 
17 & 22.35 & \bf{16.83} & 17.05 \\ 
18 & 23.70 & \bf{17.66} & 17.89 \\ 
19 & 24.91 & \bf{17.45} & 18.65 \\ 
20 & 26.20 & \bf{18.51} & 19.48 \\ 
25 & 32.91 & \bf{18.52} & 26.09 \\ 
30 & 38.91 & \bf{22.14} & 30.64 \\ 
35 & 45.46 & \bf{24.48} & 34.99 \\ 
40 & 51.91 & \bf{21.83} & 39.46 \\ 
45 & 58.14 & \bf{23.85} & 44.13 \\ 
50 & 64.81 & \bf{31.90} & 48.69 \\ 
\end{tabular}
\renewcommand{\arraystretch}{1}
\vspace{2ex}
\caption{Running times comparison (in seconds) for Example~\ref{E:Example1}.}%
\label{T:Table2}
\vspace{-3ex}
\end{center}
\end{table*}

\begin{example}\label{E:Example1}
Table~\ref{T:Table2} shows the comparison between the running times of the de 
Casteljau algorithm, Algorithm~\ref{A:Alg1} with the scaling factor 
$s\equiv s(n):=0.375n+0.9$, where $n$ is a degree of a B\'{e}zier curve, and 
the method which directly evaluates~\eqref{E:Def-gamma_k}, for two-dimensional 
B\'{e}zier curves. Note that the proposed form of the scaling factor is based 
on extensive numerical testing. 

The following numerical experiments have been conducted. For fixed $n$, $1000$ 
test sets consisting of one curve of degree $n$ are generated. Their control 
points $\p{W}_k\in[1, 2]^2$ $(0\leq k\leq n)$ have been generated using the 
\texttt{numpy.random.uniform} \textsf{NumPy} function. Each curve is then 
subdivided at 499 points $t_i:=i/500$ $(1\leq i\leq 499)$. Each algorithm is 
tested using the same curves. Table~\ref{T:Table2} shows the total running time 
of all $499\times 1000$ subdivisions.
\end{example}

\begin{table*}[ht!]
\begin{center}
\renewcommand{\arraystretch}{1.15}
\begin{tabular}{lcccc}
$n$ &  min Alg.~\ref{A:Alg1} &  mean Alg.~\ref{A:Alg1} &  
       min Eq.~\eqref{E:Def-gamma_k} & mean Eq.~\eqref{E:Def-gamma_k} \\ \hline
2 & 14.95 & 	16.70 & 15.31 & 17.69 \\
3 & 14.52 & 	16.41 & 15.13 & 17.46 \\
4 & 13.88 & 	16.06 & 15.10 & 17.31 \\
5 & 13.27 & 	15.86 & 15.05 & 17.19 \\
6 & 12.45 & 	15.56 & 15.05 & 17.09 \\
7 & 11.62 & 	15.28 & 14.96 & 17.01 \\
8 & 10.72 & 	14.86 & 14.96 & 16.94 \\
9 & 9.79 & 14.39	 & 14.96 & 16.88 \\
10 & 8.72 & 	14.04 & 14.96 & 16.83 \\
11 & 13.86 & 15.75 & 14.88 & 16.79 \\
12 & 13.79 & 15.78 & 14.88 & 16.75 \\
13 & 13.62 & 15.75 & 14.88 & 16.71 \\
14 & 13.42 & 15.62 & 14.88 & 16.68 \\
15 & 13.20 & 15.56 & 14.88 & 16.64 \\
16 & 13.23 & 15.55 & 14.82 & 16.62 \\
17 & 13.07 & 15.46 & 14.82 & 16.59 \\
18 & 12.77 & 15.39 & 14.82 & 16.56 \\
19 & 12.77 & 15.36 & 14.82 & 16.54 \\
20 & 12.46 & 15.26 & 14.82 & 16.52 \\
25 & 11.79 & 14.91 & 14.74 & 16.38 \\
30 & 10.95 & 14.72 & 14.70 & 16.30 \\
35 & 10.18 & 14.32 & 14.66 & 16.24 \\
40 & 9.51 & 14.24 & 14.64 & 16.19 \\
45 & 8.80 & 13.91 & 14.59 & 16.14 \\
50 & 7.79 & 13.65 & 14.59 & 16.10 \\
\end{tabular}
\renewcommand{\arraystretch}{1}
\vspace{2ex}
\caption{The minimum and mean number of exact significant decimal digits for 
Example~\ref{E:Example2}.}\label{T:Table3}
\vspace{-3ex}
\end{center}
\end{table*}

\begin{example}\label{E:Example2}
The same tests as in Example~\ref{E:Example1} were carried out. For each 
coordinate of every resulting control point of the subdivided curve, the number 
of correct decimal digits was determined, treating the results of the de 
Casteljau algorithm as exact. Table~\ref{T:Table3} presents the aggregate 
statistics on numerical precision, i.e., the minimum and mean number of exact 
significant decimal digits.
\end{example}

The experiments show that the scaled version of the method is effective, 
particularly for curves of higher degree. It outperforms the traditional 
approach for curves of degree $n>9$. The minimum number of correct 
significant decimal digits decreases as $n$ increases, while the mean remains 
close to the precision of the floating-point arithmetic used.

The direct application of Eq.~\eqref{E:Def-gamma_k} yields very good numerical 
results, indicating that the numerical problems of the new method for 
higher-degree curves (cf.~Table~\ref{T:Table3}) are closely related to the 
quality of the FFT and IFFT implementations employed. It should also be noted 
that the method based on the direct application of Eq.~\eqref{E:Def-gamma_k} is 
faster than using the de Casteljau algorithm already for curves of degree 
$n>5$, although both approaches have the same asymptotic computational 
complexity.

While the de Casteljau algorithm performs better for degrees 4 and 5, applying 
Remark~\ref{R:LowDegrees} to quadratic and cubic B\'{e}zier curves results in 
shorter computation times than the de Casteljau algorithm. See 
Table~\ref{T:Table4}.

\begin{table*}[ht!]
\begin{center}
\renewcommand{\arraystretch}{1.15}
\begin{tabular}{lcc}
$n$ & de~Casteljau & using Rem.~\ref{R:LowDegrees} \\ \hline
2 & 3.43 & \bf{2.84} \\
3 & 4.64 & \bf{4.56} \\ 
\end{tabular}
\renewcommand{\arraystretch}{1}
\vspace{2ex}
\caption{Running times comparison (in seconds).}%
\label{T:Table4}
\vspace{-3ex}
\end{center}
\end{table*}

In summary, numerical experiments suggest that the following 
\textit{hybrid method} should be used in practice, i.e.,
\begin{itemize}
\itemsep 1ex

\item the approach described in Remark~\ref{R:LowDegrees} if $n=2,3$,

\item the de Casteljau algorithm if $n=4,5$,

\item Eq.~\eqref{E:Def-gamma_k} if $6\leq n\leq 16$, 

\item and Algorithm~\ref{A:Alg1} for $n>16$.

\end{itemize}

We conclude this paragraph with an example related to the subdivision of 
rational B\'{e}zier curves (see Section~\ref{SS:RatBC}).

\begin{table*}[ht!]
\begin{center}
\renewcommand{\arraystretch}{1.15}
\begin{tabular}{lcc}
$n$ & rat.~de~Casteljau & new method\\ \hline
2 & 9.75  & \bf{8.27} \\
3 & 13.79 & \bf{12.10} \\
4 & 17.86 & \bf{16.78} \\ 
5 & 21.95 & \bf{17.08} \\ 
6 & 26.04 & \bf{18.90} \\ 
7 & 30.16 & \bf{20.62} \\ 
8 & 34.20 & \bf{22.52} \\ 
9 & 38.20 & \bf{24.28} \\ 
10 & 42.29 & \bf{26.16} \\
\end{tabular}
\renewcommand{\arraystretch}{1}
\vspace{2ex}
\caption{Running times comparison (in seconds) for Example~\ref{E:Example3}.}%
\label{T:Table5}
\vspace{-3ex}
\end{center}
\end{table*}

\begin{example}\label{E:Example3}
For the subdivision of a rational B\'{e}zier curve, the following experiments 
have been conducted. The control points have been sampled like 
in Example~\ref{E:Example1}. The weights of the points were sampled from 
a uniform distribution ofer $[0.01,1]$. Each of 1000 rational curves of degree 
$n$ is subdivided at 499 points $t_i:=i/500$ $(1 \leq i \leq 499)$. The 
compared algorithms were the rational de Casteljau algorithm and a new method 
based on~\eqref{E:Def-nu_k-p_V}, with~\eqref{E:Def-gamma_k} used to subdivide 
the polynomial B\'{e}zier curves in the numerator and the denominator. For 
quadratic, cubic and quartic rational B\'{e}zier curves, an unrolled version of 
the algorithm is used (cf.~Remark~\ref{R:LowDegrees}). Table~\ref{T:Table5} 
shows the total running time of all $499\times 1000$ subdivisions.
\end{example}

\section{Conclusions}                                    \label{S:Conclusions}

We have proposed a new fast method for subdividing a B\'{e}zier curve of degree 
$n$ in ${\mathbb E}^d$, with a~computational complexity of $O(dn\log n)$. The 
algorithm is based on the fast Fourier transform and its inverse. A scaled 
variant of the method, implemented in \textsf{Python}, has good numerical 
quality even for B\'{e}zier curves of relatively high degree. 

Based on numerical experiments, we propose a \textit{hybrid method} that is 
an effective tool for solving the B\'{e}zier curve subdivision problem. 
Furthermore, the method enables more efficient evaluation of B\'{e}zier curve 
derivatives, as well as faster subdivision of rational B\'{e}zier curves and 
rectangular polynomial or rational B\'{e}zier patches.

The conducted numerical experiments indicate that, in future research, one 
should investigate in greater detail the possibility of an \textit{optimal} 
choice of the scaling factor. This involves taking into account not only the 
degree of the considered curve, but also the subdivision point, as well as 
exploring various other popular and efficient implementations of the FFT and 
the IFFT. Such an approach is expected to make the new method more numerically 
efficient, even for B\'{e}zier curves of very high degree.

\bibliographystyle{elsart-num-sort}
\biboptions{sort&compress}
\bibliography{BC-fast-subdivision-NUMA}

\end{document}